\newtheorem{theorem}{Theorem}
\newtheorem{lemma}[theorem]{Lemma}
\newtheorem{claim}[theorem]{Claim}
\theoremstyle{definition}
\newcommand{\bigO}{\mathcal{O}}
\newcommand{\discretization}{\delta}
\newcommand{\perm}{\mathcal{L}(n)}
\DeclareMathOperator*{\SW}{\text{SW}}
\DeclareMathOperator*{\LP}{\text{LP}}
\DeclareMathOperator{\SUB}{{Sub}}
\DeclareMathOperator*{\EG}{\text{EG}}
\DeclareMathOperator{\sumv}{sum}
\newcommand{\wgt}{\text{wgt}}
\newcommand{\R}{\mathbb{R}}
\newcommand{\T}{\mathbf{T}}
\newcommand{\problem}{SMEF}
\title{\bf Computing envy-freeable allocations\\with limited subsidies}
\author{Ioannis Caragiannis\thanks{Department of Computer Science, Aarhus University, Aabogade 34, 8200 Aarhus N, Denmark. Email: \texttt{iannis@cs.au.dk}}
\and
Stavros Ioannidis\thanks{Department of Informatics, King's College London, Bush House, Strand Campus, 30 Aldwych, London WC2B 4BG, United Kingdom. Email: \texttt{stavros.ioannidis@kcl.ac.uk}}
}
\date{}
\begin{document}
	
\maketitle

\begin{abstract}
Fair division has emerged as a very hot topic in multiagent systems, and envy-freeness is among the most compelling fairness concepts. An allocation of indivisible items to agents is envy-free if no agent prefers the bundle of any other agent to his own in terms of value. As envy-freeness is rarely a feasible goal, there is a recent focus on relaxations of its definition. An approach in this direction is to complement allocations with payments (or subsidies) to the agents. A feasible goal then is to achieve envy-freeness in terms of the total value an agent gets from the allocation and the subsidies.
  	
We consider the natural optimization problem of computing allocations that are {\em envy-freeable} using the minimum amount of subsidies. As the problem is NP-hard, we focus on the design of approximation algorithms. On the positive side, we present an algorithm which, for a constant number of agents, approximates the minimum amount of subsidies within any required accuracy, at the expense of a graceful increase in the running time. On the negative side, we show that, for a superconstant number of agents, the problem of minimizing subsidies for envy-freeness is not only hard to compute exactly (as a folklore argument shows) but also, more importantly, hard to approximate. 
\end{abstract}

\section{Introduction}
Fairly dividing goods among people is an extremely important quest since antiquity. Today, fair division is a flourishing area of research in computer science, economics, and political science and {\em envy-freeness} is considered as the ultimate fairness concept~\citep{P20}. Following a research trend that is very popular in AI recently, we consider allocation problems with indivisible items. An allocation of items to agents is envy-free if no agent prefers the bundle of items allocated to some other agent to her own. Traditionally, agents' preferences are based on cardinal valuations they have for the items.

Unfortunately, with indivisible items, envy-freeness is rarely a feasible goal. For example, no such allocation exists in the embarrassingly simple case with a single item and two agents with some value for it. Recently proposed relaxations of envy-freeness aim to serve as useful alternative fairness notions. In a line of research that emerged very recently, allocations are complemented with payments (or {\em subsidies}) to the agents \citep{HS19,BDN+19}. Now, envy-freeness dictates that no agent prefers the allocation and payment of another agent to hers, and becomes a feasible goal. However, important questions arise related to the sparing use of money.

In this paper, we follow an optimization approach. We define and study the optimization problem \problem\ (standing for Subsidy Minimization for Envy-Freeness). Given an allocation problem consisting of items and agents with valuations for the items, \problem\ asks for an allocation that is envy-freeable using the minimum total amount of subsidies.

\problem\ is NP-hard; this follows by the NP-hardness of deciding whether a given allocation problem has an envy-free allocation or not. Thus, we resort to {\em approximation algorithms} for \problem. As multiplicative approximation guarantees are hopeless, our aim is to design algorithms that run in polynomial-time and compute an allocation that is envy-freeable with an amount of subsidies that does not exceed the minimum possible amount of subsidies (denoted $\chi$) by much. In particular, we use the total valuation of all agents for all goods (denoted by $\sumv{v}$) as a benchmark and seek allocations that are envy-freeable with an amount of at most $\chi+\rho\cdot \sumv{v}$ as subsidies. The goal for the approximation guarantee $\rho$ of an algorithm is to be as small as possible.

We initiate the study of \problem\ and present two results. On the positive side, we design an algorithm that achieves an arbitrary low approximation guarantee of $\epsilon>0$. When applied to allocation instances with a constant number of agents, the algorithm uses {\em dynamic programming} and runs in time that is polynomial in the number of items and $1/\epsilon$. On the negative side, we show that, in general, \problem\ is not only hard to solve exactly, but also hard to approximate within a small constant. Unlike the folklore reduction\footnote{Notice that deciding whether an envy-free allocation exists for two agents with identical item valuations requires solving {\sc Partition}, a well-known NP-hard problem~\citep{GJ79}.} for proving hardness of envy-freeness, our proof uses a novel {\em approximation-preserving} reduction. Besides separating the general case from that with constantly many agents, our negative result  indicates that achieving good approximation guarantees will be a challenging goal. 

\subsection{Related work} 
The concept of envy-freeness was formally introduced by~\cite{F67} and~\cite{V74}. As envy-freeness may not be achievable when goods are indivisible, recent research has focused on defining approximations of envy-freeness. These include envy-freeness up to one good \citep{B11}, envy-freeness up to any good~\citep{CKM+19}, epistemic envy-freeness~\citep{ABC+18}, and more. Still, achieving even them in polynomial time can be challenging, and recent work has focused on approximation algorithms; see, e.g., \citep{LMMS04,CEEM07,PR18,BKV18,CGH19,CKMS20,ANM20}.

The approach of mixing allocations with payments either from or to the agents has been extensively considered in the economics literature. A typical example is the rent division problem, where $n$ items (rooms) and a fixed rent have to be divided among $n$ agents in an envy-free manner \citep{S99,S83}. Compensations to the agents were first considered by~\cite{M87}. Subsequent papers consider unit-demand allocation problems, where each agent can get at most one item; see, e.g., \citep{ADG91}. \cite{A95} and \cite{K00} give polynomial-time algorithms that compute allocations and payments. More general models are studied by~\cite{HRS02} and~\cite{MPR02}.

In the AI literature, \cite{CEM17} consider allocation problems and monetary transfers between the agents. In a model that is the closest to ours, \cite{HS19} aim to bound the amount of external subsidies assuming that all agent valuations for goods are in $[0,1]$. Among several results, they conjectured that subsidies of $n-1$ suffice; an even stronger version of the conjecture was proved very recently by~\cite{BDN+19}.

\subsection{Roadmap}
The rest of the paper is structured as follows. We begin with preliminary definitions in Section~\ref{sec:prelim}. Our approximation algorithm is presented in Section~\ref{sec:alg} and our result on the hardness of approximation for SMEF is presented in Section~\ref{sec:hardness}. We conclude in Section~\ref{sec:open}.

\section{Preliminaries}\label{sec:prelim}
We consider allocation instances with a set $M$ of $m$ items and a set $N$ of $n$ agents. Each agent $i\in N$ has a valuation function $v_i:M\rightarrow \R_{\geq 0}$ over the items.\footnote{In our exposition, we assume that valuations are non-negative, even though our positive result can be extended to work without this assumption, in the model of~\cite{ACIW19} where items can be goods or chores.} With some abuse of notation, we use $v_i(B)$ to denote the valuation of agent $i$ for the set (or {\em bundle}) of items $B$. Valuations are additive, i.e., $v_i(B)=\sum_{g\in B}{v_i(g)}$. An allocation is simply a partition $X=(X_1, X_2, ..., X_n)$ of the items of $M$ into $n$ disjoint bundles, where agent $i\in N$ is supposed to get the bundle $X_i$. We use the abbreviations $\sumv{v}=\sum_{i\in N}{v_i(M)}$ and $\max{v}=\max_{i\in N}{v_i(M)}$. 

As usual, we define the {\em social welfare} of an allocation $X=(X_1, ..., X_n)$ to be $\SW(X,v)=\sum_{i\in N}{v_i(X_i)}$. An allocation $X=(X_1, X_2, ..., X_n)$ is {\em envy-free} if $v_i(X_i)\geq v_i(X_j)$ for every pair of agents $i$ and $j$. Informally, envy-freeness requires that no agent envies the bundle allocated to any other agent compared to her own.

For an allocation $X=(X_1, ..., X_n)$  in an instance with agent valuations $v$, the {\em envy graph}  $\EG(X,v)$, introduced by \cite{LMMS04}, is an edge-weighted complete directed graph that has a node for each agent and the weight of the directed edge $(i,j)$ represents the ``envy'' of agent $i$ for agent $j$. Using $G=\EG(X,v)$ and $\wgt_G(i,j)$ for the weight of the directed edge from node $i$ to node $j$ in the envy graph $\EG(X,v)$, 
we define $\wgt_G(i,j)=v_i(X_j)-v_i(X_i)$. 

Following the modelling assumptions of \cite{HS19}, we also consider {\em payments} (or {\em subsidies}) to the agents, represented by a payment vector $\pi=\langle \pi_1, ..., \pi_n\rangle$ with non-negative entries, i.e., $\pi_i\geq 0$ for every agent $i\in N$. Below, we use the terms ``payment'' and ``subsidy'' interchangeably. Now, we say that the pair $(X,\pi)$ of the allocation $X$ and payment vector $\pi$ is envy-free if $v_i(X_i)+\pi_i \geq v_i(X_j)+\pi_j$ for every pair of agents $i,j\in N$. Informally, this extended version of envy-freeness requires that no agent envies the bundle and the payment of any other agent compared to the bundle and payment she gets.

We say that allocation $X$ is {\em envy-freeable} if there is a payment vector $\pi$ so that the pair $(X,\pi)$ is envy-free.  Although the use of payments makes envy-freeness a feasible goal, not all allocations are envy-freeable. The following theorem, due to~\cite{HS19}, gives sufficient and necessary conditions so that an allocation is envy-freeable.

\begin{theorem}[\cite{HS19}]\label{thm:hs19}
The following statements are equivalent:
\begin{itemize}
	\item The allocation $X=(X_1,X_2, ..., X_n)$ is envy-freeable.
	\item The allocation $X$ maximizes social welfare among all redistributions of its bundles to the agents.
	\item The envy graph $\EG(X,v)$ contains no directed cycles of positive total weight.
\end{itemize}	
\end{theorem}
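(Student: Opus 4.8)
The plan is to prove the three-way equivalence by the cyclic chain of implications: $X$ envy-freeable $\Rightarrow$ $X$ welfare-maximal among redistributions $\Rightarrow$ $\EG(X,v)$ has no positive-weight directed cycle $\Rightarrow$ $X$ envy-freeable. Each redistribution of the bundles of $X$ is described by a permutation $\sigma$ of $N$, where agent $i$ receives bundle $X_{\sigma(i)}$.

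For the first implication, I would start from an envy-free pair $(X,\pi)$ and sum the defining inequality $v_i(X_i)+\pi_i\geq v_i(X_{\sigma(i)})+\pi_{\sigma(i)}$ over all $i\in N$ for an arbitrary redistribution $\sigma$. Since $\sigma$ merely permutes $N$, the payment terms on the two sides are equal, $\sum_{i\in N}\pi_{\sigma(i)}=\sum_{i\in N}\pi_i$, and they cancel, leaving $\SW(X,v)\geq\sum_{i\in N} v_i(X_{\sigma(i)})$; hence $X$ maximizes social welfare over all redistributions. For the second implication I would argue the contrapositive: given a directed cycle $i_1\to i_2\to\cdots\to i_k\to i_1$ of positive total weight in $\EG(X,v)$, consider the redistribution that cyclically shifts bundles along this cycle (agent $i_t$ takes $X_{i_{t+1}}$, indices modulo $k$) and leaves every other agent unchanged. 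Its change in social welfare is exactly $\sum_{t=1}^{k}\bigl(v_{i_t}(X_{i_{t+1}})-v_{i_t}(X_{i_t})\bigr)=\sum_{t=1}^{k}\wgt_G(i_t,i_{t+1})>0$, contradicting welfare-maximality of $X$.

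For the last implication I would construct an explicit payment vector: set $\pi_i$ to be the maximum weight of a directed path in $\EG(X,v)$ that starts at node $i$, allowing the empty path of weight $0$. The key point is that this is well defined when $\EG(X,v)$ has no positive-weight cycle — any walk that visits a vertex twice decomposes into a shorter walk plus a directed cycle of non-positive weight, so deleting the cycle does not decrease the weight; thus the maximum is attained by a simple path, of which there are finitely many, and $\pi_i\geq 0$. Then, for any ordered pair $i,j$, prepending the edge $(i,j)$ to a maximum-weight path from $j$ gives $\pi_i\geq\wgt_G(i,j)+\pi_j=v_i(X_j)-v_i(X_i)+\pi_j$, which rearranges to $v_i(X_i)+\pi_i\geq v_i(X_j)+\pi_j$, i.e., $(X,\pi)$ is envy-free.

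The one delicate step is the well-definedness (finiteness) of the potentials $\pi_i$ in the third implication: this is the unique place where the hypothesis on cycle weights is genuinely used, and it requires the care of passing from arbitrary walks to simple paths via cycle-deletion. The other two implications are, respectively, a one-line summation in which the payments telescope away, and a direct local-swap argument on a hypothetical positive cycle.
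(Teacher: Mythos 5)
Your proof is correct, and it follows essentially the standard argument of Halpern and Shah that the paper cites (the theorem is stated without proof here): the cyclic chain of implications, the telescoping of payments over a permutation, the local cyclic-swap contradiction, and the potential function $\pi_i$ defined as the maximum path weight out of node $i$ — which is exactly the payment characterization the paper itself invokes right after the theorem. The one delicate point you flag, reducing walks to simple paths by deleting non-positive cycles to make the potentials well defined, is handled correctly.
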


Detecting whether a given allocation $X$ is envy-freeable can be done using the following linear program $\LP(X,v)$:
\begin{align}\label{eq:LP}
	\mbox{minimize} & \quad \sum_{i\in N}{\pi_i}\\\nonumber
	\mbox{subject to:} & \quad \pi_i-\pi_j \geq v_i(X_j)-v_i(X_i), \forall i,j\in N\\\nonumber
	 & \quad \pi\geq 0
\end{align}
$\LP(X,v)$ aims to find a payment vector $\pi$ so that the envy-freeness constraints between pairs of agents are satisfied. In addition, it minimizes the total amount of payments. As it is observed by~\cite{HS19}, the payment $\pi_i$ of agent $i$ obtained in this way is equal to the maximum total weight in any simple path that originates from node $i$ in the envy graph $\EG(X,v)$.

We study the optimization problem \problem\ (standing for Subsidy Minimization for Envy-Freeness). Given an allocation instance, \problem\ aims to compute an allocation that is envy-freeable with the minimum amount of subsidies. Since the problem of computing an envy-free allocation is NP-hard, \problem\ is NP-hard as well. 

We are interested in the design of approximation algorithms for \problem. As algorithms with finite multiplicative approximation ratio are hopeless (since it is NP-hard to decide whether the minimum amount of subsidies is zero or not), we seek polynomial-time algorithms that compute an allocation that is envy-freeable with subsidies $\chi+\rho\cdot\sumv{v}$, with the approximation guarantee $\rho$ being as low as possible.

As a warmup, consider the algorithm that allocates all items to the agent $i^*$ who has maximum value for $M$ and paying a subsidy of $v_{i^*}(M)$ to every other agent $i$. Clearly, this is a polynomial-time algorithm. The allocation obtained is envy-freeable since no redistribution of the bundles (i.e., giving all items to another agent) results in higher social welfare. And the particular payments are right: agent $i^*$ is indifferent between the bundle $M$ and the payment to any other agent, while the other agents are indifferent between the (equal) payments, and prefer their payment to getting the whole bundle $M$. It can be easily verified that the algorithm guarantees an amount of at most $\chi+(n-1)\max{v}\leq \chi+(n-1)\sumv{v}$ as subsidies; this is the best guarantee of this form for this algorithm in the worst-case. 

\section{An approximation algorithm}\label{sec:alg}
We now present an algorithm that does much better. The algorithm exploits ideas that have led to polynomial-time approximation schemes for combinatorial optimization problems like {\sc Knapsack}; e.g., see~\cite{V01}. It first discretizes all valuations to multiples of a discretization parameter. In this way, the different discretized valuations an agent can have for bundles of items in the new instance is small. This allows to classify all allocations into a relatively small number of classes, each defined by specific discretized valuation levels of each agent for all bundles. Dynamic programming is used to decide the classes that are non-empty and to select a representative allocation from each class. The final allocation is selected among all representative allocations, possibly after redistributing the bundles so that social welfare (with respect to the original valuations) is maximized (in order to get envy-freeability). This requires a call to linear program (\ref{eq:LP}) to compute the minimum amount of subsidies for each representative allocation. 

The classification of allocations guarantees that the algorithm will consider a representative allocation from the class that also contains the optimal one (i.e., the allocation that is envy-freeable with the minimum amount of subsidies overall). Our analysis shows that the amount of subsidies for making the representative allocation envy-free is close to optimal. Polynomial running time for the case of a constant number of agents follows by setting the discretization parameter appropriately.

We now present our algorithm in detail. It uses an accuracy  parameter $\epsilon>0$ and initially decides the value of the discretization parameter $\discretization$ as follows:
$$\discretization=\frac{\epsilon \max{v}}{4mn^2}.$$
First, the algorithm implicitly discretizes all agent valuations by defining new valuations $\tilde{v}$ as follows: for an agent $i$ with valuation $v_i(g)$ for item $g$, the discretized valuation $\tilde{v}_i(g)$ is equal to $\left\lfloor v_i(g)/\discretization\right\rfloor \discretization$.

The algorithm uses an arbitrary ordering of the items in $M$; let $M=\{g_1, g_2, ..., g_m\}$, where the item indices are those in this ordering. The algorithm builds a table $\T$ which classifies all possible allocations of subsets of $M$. Consider an $(n^2+1)$-dimensional tuple $\tau=(t,P_{ij}, 1\leq i,j \leq n)$, where $t$ is an integer from $1$ to $m$ and $P_{ij}$ is an integer from $0$ to $\left\lfloor\max{v}/\discretization\right\rfloor$, for every pair of agents $i$ and $j$. The entry $\T(\tau)$ of the table indicates whether an allocation $A^t=(A^t_1, A^t_2, ..., A^t_n)$ of the first $t$ items $g_1, ..., g_t$ of $M$ to the $n$ agents, satisfying $\tilde{v}_i(A^t_j)=P_{ij}\discretization$ for every pair of agents $i$ and $j$, exists ($\T(\tau)=1$) or not ($\T(\tau)=0$). 

The entries of $\T$ are computed using the following recursive relation:
\begin{itemize}
	\item For a tuple $\tau=(t,P_{ij}, 1\leq i,j \leq n)$ with $t=1$, the algorithm sets $\T(\tau)=1$ if there exists $k\in [n]$ such that, for every $i\in [n]$, $\tilde{v}_i(g_1)=P_{ik}\discretization$ and $P_{ij}=0$ for every $j\not=k$. Otherwise, the algorithm sets $\T(\tau)=0$.
	\item For a tuple $\tau=(t,P_{ij}, 1\leq i,j \leq n)$ with $t>1$, the algorithm sets $\T(\tau)=1$ if there exists $k\in [n]$ and tuple $\tau'=(t-1,P'_{ij}, 1\leq i,j \leq n)$ such that, for every $i\in [n]$, $P_{ik}=P'_{ik}+\tilde{v}_i(g_t)/\discretization$ and $P_{ij}=P'_{ij}$ for every $j\not=k$. Otherwise, the algorithm sets $\T(\tau)=0$.
\end{itemize}

Essentially, each non-zero entry of $\T$ (e.g., $\T(\tau)=1$) indicates a non-empty class $\mathcal{A}_\tau$ of (possibly partial, when the first argument of $\tau$ is an integer smaller than $m$) allocations. To compute a representative complete allocation $A_\tau\in \mathcal{A}_\tau$ among those implied by the non-zero entry corresponding to the tuple $(m,P^m_{ij}, 1\leq i,j\leq n)$, the algorithm does the following for $t=m$ downto $2$. Let $k\in [n]$ be such that $\T(\tau')=1$ for a tuple $\tau'=(t-1, P^{t-1}_{ij}, 1\leq i,j \leq n)$ with $P^{t-1}_{ik}=P^t_{ik}-\tilde{v}_i(g_t)/\discretization$ and $P^{t-1}_{ij}=P^{t}_{ij}$ for every pair of agents $i$ and $j\not=k$. The algorithm assigns item $g_t$ to agent $k$ and proceeds to considering the next item. The first item $g_1$ is assigned to agent $k$ such that $\T(\tau')=1$ for a tuple $\tau'=(1, P^{1}_{ij}, 1\leq i,j \leq n)$ with $P^{1}_{ik}=\tilde{v}_i(g_1)/\discretization$ and $P^{1}_{ij}=0$ for every pair of agents $i$ and $j\not=k$.

Next, the algorithm redistributes the bundles of each allocation $A_\tau$ that represents a non-empty class $\mathcal{A}_\tau$ so that an allocation $A'_\tau$ of maximum social welfare (among those that distribute the particular bundles to the agents) is obtained (in terms of the original valuations). It solves $\LP(A'_\tau,v)$ (for the original valuations) to compute the minimum amount of subsidies that make $A'_\tau$ envy-free. Among all allocations $A'_\tau$, it outputs the one with the minimum amount of subsidies. The approximation guarantee of the algorithm is given by the next lemma.

\begin{lemma}
	Given an instance of \problem\ that has an allocation that is envy-freeable with an amount of $\chi$ as total subsidies, the algorithm computes an allocation that is envy-freeable with total subsidies of at most $\chi+4mn^2\discretization$.
\end{lemma}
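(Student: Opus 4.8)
The plan is to fix an optimal solution, show the algorithm's enumeration reaches a class containing it, and bound the cost of making the redistributed representative of that class envy-free. Let $X^*=(X^*_1,\dots,X^*_n)$ be an allocation that is envy-freeable with total subsidies $\chi$, and let $\pi^*$ be an optimal solution of $\LP(X^*,v)$, so $(X^*,\pi^*)$ is envy-free and $\sum_{i}\pi^*_i=\chi$. Since rounding each item value down to a multiple of $\discretization$ changes the value of any bundle by less than $m\discretization$, the allocation $X^*$ lies in the class $\mathcal{A}_\tau$ of the tuple $\tau=(m,P_{ij})$ with $P_{ij}\discretization=\tilde v_i(X^*_j)$; hence $\T(\tau)=1$ and the algorithm produces a representative $A_\tau$ with $\tilde v_i(A_{\tau,j})=\tilde v_i(X^*_j)$ for all $i,j$. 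Consequently $|v_i(A_{\tau,j})-v_i(X^*_j)|<m\discretization$ for all $i,j$, and combining this with the envy-freeness of $(X^*,\pi^*)$ gives
\[
v_i(A_{\tau,i})+\pi^*_i\ \ge\ v_i(A_{\tau,j})+\pi^*_j-2m\discretization\qquad\text{for all }i,j .
\]
Reading $\pi^*_j$ as a price permanently attached to the bundle $A_{\tau,j}$, this says that in $A_\tau$ every agent is within an additive $2m\discretization$ of her best (value-plus-attached-price) bundle.

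The main obstacle is the redistribution step: the algorithm does not keep $A_\tau$, but redistributes its bundles to a welfare-maximizing allocation $A'_\tau$ (which is then envy-freeable by Theorem~\ref{thm:hs19}) and reports the optimum of $\LP(A'_\tau,v)$. One cannot reuse $\pi^*$, even permuted, as a payment vector for $A'_\tau$, because valuations stay with agents while bundles move. The remedy is to keep prices attached to bundles: let $\sigma$ be the permutation with $A'_{\tau,i}=A_{\tau,\sigma(i)}$, so under this bundle-pricing agent $i$ holds bundle $A_{\tau,\sigma(i)}$ at price $\pi^*_{\sigma(i)}$, and let $\theta_i\ge 0$ be the gap between agent $i$'s best value-plus-price bundle and the value-plus-price of $A_{\tau,\sigma(i)}$. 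Summing the displayed inequality over $i$, using $\sum_i\pi^*_i=\chi$ and that $A_\tau$ is itself one of the redistributions whose welfare $A'_\tau$ maximizes (so $\SW(A'_\tau,v)\ge\SW(A_\tau,v)$), yields $\sum_i\theta_i\le 2nm\discretization$.

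To conclude, invoke the characterization (recalled above, from~\cite{HS19}) that the optimum of $\LP(A'_\tau,v)$ equals $\sum_i L_i$, where $L_i$ is the maximum weight of a simple path starting at $i$ in $\EG(A'_\tau,v)$ — finite because $A'_\tau$ is welfare-maximizing and so has no positive-weight cycle. Along any simple path $i=u_0\to u_1\to\cdots\to u_k$, the weight of the edge $(u_t,u_{t+1})$, namely $v_{u_t}(A_{\tau,\sigma(u_{t+1})})-v_{u_t}(A_{\tau,\sigma(u_t)})$, is at most $\theta_{u_t}+\pi^*_{\sigma(u_t)}-\pi^*_{\sigma(u_{t+1})}$, since the value-plus-price of $A_{\tau,\sigma(u_{t+1})}$ to agent $u_t$ is at most her maximum, which is within $\theta_{u_t}$ of the value-plus-price of $A_{\tau,\sigma(u_t)}$. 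Summing along the path, the price terms telescope and the $u_t$ are distinct, so $L_i\le\sum_t\theta_{u_t}+\pi^*_{\sigma(i)}\le 2nm\discretization+\pi^*_{\sigma(i)}$; adding over $i\in N$ and using $\sum_i\pi^*_{\sigma(i)}=\chi$ gives total subsidies at most $\chi+2n^2m\discretization\le\chi+4mn^2\discretization$. A secondary, routine point to verify en route is that the table recursion marks exactly the non-empty classes and that the reconstruction returns a genuine allocation of $M$ in the marked class, which follows by induction on $t$.
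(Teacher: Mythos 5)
Your proof is correct, and it takes a genuinely different route from the paper's for the key step. Both arguments share the setup (the optimal allocation $X^*$ and the representative $A_\tau$ lie in the same class, so their discretized bundle values coincide and every $v_i(A_{\tau,j})$ is within $m\discretization$ of $v_i(X^*_j)$) and both finish via the characterization of the optimal payments as maximum simple-path weights in the envy graph of the redistributed allocation $A'_\tau$. But where the paper maps each edge of $\EG(A'_\tau,v)$ to a (non-simple) path in $\EG(X^*,v)$ built from the cycle structure of the redistribution permutation $\sigma$, and then invokes the no-positive-cycle property of $\EG(X^*,v)$ to shortcut back to a simple path, you never touch the envy graph of the optimum at all: you attach the optimal payments $\pi^*$ to the \emph{bundles} as prices, define per-agent slacks $\theta_i$ against the best value-plus-price bundle, bound $\sum_i\theta_i\leq 2nm\discretization$ in aggregate using only that the redistribution step maximizes welfare over permutations of the bundles, and then let the prices telescope along any simple path in $\EG(A'_\tau,v)$. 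This is an approximate-Walrasian-equilibrium argument in disguise; it is arguably cleaner (no path surgery, no appeal to the cycle criterion of Theorem~\ref{thm:hs19} for the optimal allocation), and it even yields the slightly sharper bound $\chi+2mn^2\discretization$ in place of $\chi+4mn^2\discretization$, which of course implies the stated lemma. The only points you wave at rather than spell out --- that the table recursion marks exactly the non-empty classes, and that the algorithm's final minimization over all representatives can only improve on the particular $A'_\tau$ you analyze --- are indeed routine and are treated just as briefly in the paper.
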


\begin{proof}
Let $\tau$ be a full tuple such that $\mathcal{A}_\tau$ contains an allocation $O=(O_1, ..., O_n)$ that is envy-freeable with subsidies of $\chi$. Since $\mathcal{A}_\tau$ is non-empty, it is $\T(\tau)=1$. Let $A$ be the allocation computed by the algorithm as representative of $\mathcal{A}_\tau$ and $A'$ the allocation that is obtained after redistributing the bundles of $A$. By Theorem~\ref{thm:hs19}, $A'$ is clearly envy-freeable; we will show that the corresponding subsidies are at most $\chi+4mn^2\discretization$. Clearly, the output of the algorithm will be envy-freeable with at most this amount of subsidies.
	
	Let $\sigma\in \perm$ be the permutation over $[n]$ such that $A'_j=A_{\sigma(j)}$ for every $j\in [n]$. Let $G$ and $H$ be the envy graphs $\EG(O,v)$ and $\EG(A',v)$, respectively. 
	
	We now present the most crucial component of our analysis. It exploits the fact that both $O$ and $A$ belong to class $\mathcal{A}_\tau$ and uses the third statement of Theorem~\ref{thm:hs19}.
	
	\begin{lemma}\label{lem:edge-to-path}
		For every pair of agents $i$ and $j$, there exists a (not necessarily simple) path $p(i,j)$ from node $\sigma(i)$ to node $\sigma(j)$ such that 
		\begin{align*}
		\wgt_H(i,j) &\leq \sum_{e\in p(i,j)}{\wgt_G(e)}+4m\discretization.
		\end{align*}
	\end{lemma}
		
\begin{proof}
		In the proof, we will use the following simple claim.
	\begin{claim}\label{claim:sandwitch}
		For every agent $i$ and every two bundles $B_1$ and $B_2$ such that $\tilde{v}_i(B_1)=\tilde{v}_i(B_2)$, 
		it holds that
		\begin{align}
		-|B_2|\discretization &\leq v_i(B_1)-v_i(B_2) \leq |B_1|\discretization.
		\end{align}
	\end{claim}
	
	\begin{proof}
		First observe that, by the definition of $\tilde{v}$ and its relation to $v$, for every agent $i$ and item $g\in M$, it holds that $\tilde{v}_i(g)\leq v_i(g)\leq \tilde{v}_i(g)+\discretization$. Hence, for every bundle $B$, 
		\begin{align*}
		\tilde{v}_i(B) &\leq v_i(B)\leq \tilde{v}_i(B)+|B|\discretization. 
		\end{align*}
		The claim follows by applying this inequality for bundles $B_1$ and $B_2$ and using the fact that $\tilde{v}_i(B_1)=\tilde{v}_i(B_2)$.
	\end{proof}
	
	We use the notation $\sigma^{-1}$ to refer to the inverse permutation of $\sigma$, i.e., $\sigma^{-1}(k)=j$ when $k=\sigma(j)$. Consider the set $C$ that contains edge $(k,\sigma^{-1}(k))$ for every agent $k$ such that $k\not=\sigma^{-1}(k)$. $C$ is either empty (if $k=\sigma^{-1}(k)$ for every agent $k$) or consists of disjoint directed cycles. For an agent $i$, if $\sigma^{-1}(i)\not=i$, we denote by $C_i$ the set of nodes that are spanned by the cycle of $C$ that includes node $i$. Otherwise, we define $C_i$ to contain only node $i$.
	
	Define the (not necessarily simple) path $p(i,j)$ from node $\sigma(i)$ to node $\sigma(j)$ to contain edge $(k,\sigma(k))$ for every node $k$ in the set $C_i$ besides node $i$ and, if $i\not=\sigma(j)$, the directed edge $(i,\sigma(j))$.
	
	For every pair of agents $i$ and $j$, we have that the weight of the directed edge $(i,j)$ in $H$ is
	\begin{align*}
	\wgt_H(i,j) &\leq\wgt_H(i,j)-\sum_{k\in C_i}{\wgt_H(k,\sigma^{-1}(k))}\\ 
	&= v_i(A'_j)-v_i(A'_i)-\sum_{k\in C_i}{\left(v_k(A'_{\sigma^{-1}(k)})-v_k(A'_k)\right)}\\
	&= v_i(A_{\sigma(j)})- v_i(A_{\sigma(i)})-\sum_{k\in C_i}{\left(v_k(A_k)-v_k(A_{\sigma(k)})\right)}\\
	&\leq v_i(O_{\sigma(j)})- v_i(O_{\sigma(i)})-\sum_{k\in C_i}{\left(v_k(O_k)-v_k(O_{\sigma(k)})\right)}\\
	&\quad +\left(|A_{\sigma(j)}|+|O_{\sigma(i)}|+\sum_{k\in C_i}{|O_k|}+\sum_{k\in C_i}{|A_{\sigma(k)}|}\right)\discretization\\
	&\leq v_i(O_{\sigma(j)})- v_i(O_{\sigma(i)})-\sum_{k\in C_i}{\left(v_k(O_k)-v_k(O_{\sigma(k)})\right)}+4m\discretization\\
	&=v_i(O_{\sigma(j)})- v_i(O_i)+\sum_{k\in C_i\setminus \{i\}}{\left(v_k(O_{\sigma(k)})-v_k(O_k)\right)}+4m\discretization\\
	&=\wgt_G(i,\sigma(j))+\sum_{k\in C_i\setminus\{i\}}{\wgt_G(k,\sigma(k))}+4m\discretization\\
	&=\sum_{e\in p(i,j)}{\wgt_G(e)}+4m\discretization.
	\end{align*}
	The first inequality follows since $C_i$ consists of node $i$ only (when $i=\sigma(i)$) or the edges $(k,\sigma^{-1}(k))$ for $k\in C_i$ form a directed cycle of non-positive total weight in $H$. The second inequality follows by applying Claim~\ref{claim:sandwitch} (recall that both allocations $A$ and $O$ belong to the class $\mathcal{A}_\tau$ and, hence, $\tilde{v}_\ell(A_q)=\tilde{v}_\ell(O_q)$ for every pair of agents $\ell$ and $q$). The third inequality follows since the bundles $A_{\sigma(k)}$ (respectively, $O_k$) for $k\in C_i$ are disjoint. The equalities are obvious or follow by the definition of the weights.
\end{proof}

	Now, let $\pi'$ and $\pi$ be the solutions of $\LP(A',v)$ and $\LP(O,v)$, respectively. Hence, $\chi=\SUB(O,v)=\sum_{i=1}^n{\pi_i}$. We will use Lemma~\ref{lem:edge-to-path} to argue that 
	\begin{align}\label{eq:close}
	\pi'_i\leq \pi_{\sigma(i)}+4mn\discretization.
	\end{align}
	This will yield $$\SUB(A',v)=\sum_{i=1}^{n}{\pi'_i}\leq \sum_{i=1}^{n}{\left(\pi_{\sigma(i)}+4mn\discretization\right)} = \chi+4mn^2\discretization,$$
	completing the proof.
	
Recall from Theorem~\ref{thm:hs19} that the payment $\pi'_\ell$ (respectively, $\pi_\ell$) is equal to the maximum path weight over all simple paths that originate from node $\ell$ in graph $H$ (respectively, graph $G$). Let $Q_\ell$ be the corresponding simple path that is destined for some node $s$ (and originates from node $\ell$), i.e., $\pi'_\ell=\sum_{e\in Q_\ell}{\wgt_H(e)}$. We construct the (not necessarily simple) path $P_\ell$ from node $\sigma(\ell)$ to node $\sigma(s)$ of $G$ that consists of path $p(i,j)$ for every directed edge $(i,j)$ in the path $Q_\ell$. Using Lemma~\ref{lem:edge-to-path}, we get
	\begin{align}\nonumber
	\pi'_\ell &=\sum_{e\in Q_\ell}{\wgt_H(e)} \leq \sum_{e\in Q_\ell}{\left(\sum_{e'\in p(e)}{\wgt_G(e')}+4m\discretization\right)}\\\label{eq:bound-by-P_t}
	& \leq \sum_{e\in Q_\ell}{\sum_{e'\in p(e)}{\wgt_G(e')}}+4mn\discretization =\sum_{e\in P_\ell}{\wgt_G(e)}+4mn\discretization. 
	\end{align}
	The second inequality follows since path $Q_\ell$ is simple (and, hence, contains at most $n-1$ edges). Now, create the simple path $P'_\ell$ from node $\sigma(\ell)$ to node $\sigma(s)$ by removing the cycles in $P_\ell$. Since graph $G$ does not have any directed cycles of positive total weight (by Theorem~\ref{thm:hs19}), we have $\wgt_G(P_\ell)\leq \wgt_G(P'_\ell)$. Now, (\ref{eq:bound-by-P_t}) yields
	\begin{align*}
	\pi'_\ell & \leq \sum_{e\in P'_\ell}{\wgt_G(e)}+4mn\discretization,
	\end{align*}
	which implies (\ref{eq:close}) since $P'_\ell$ is a simple path that originates from node $\sigma(i)$.
\end{proof}

The running time of the algorithm depends on the number of table entries, the number of steps required for computing each table entry using the recursive relation, the number of steps required to compute a representative allocation for a non-empty allocation class, the redistribution time, and the time required to solve the linear programs.

The dimensions of the table $\T$ are $m$ for the first one that enumerates over all items, and at most $1+ \left\lfloor\max{v}/\discretization\right\rfloor=1+\frac{4mn^2}{\epsilon}$ for each of the other dimensions. Overall, the size of the table is $\bigO\left(\left(\frac{m}{\epsilon}\right)^{n^2+1}\right)$. The computation of each table entry using the recursive relation needs the values in $n^2$ table entries that have previously computed. In a representative allocation, the agent in which each of the $m$ items is allocated requires time $n^2$ as well, i.e., time $\bigO(m)$ in total. The redistribution of the bundles can be implemented using a matching computation in a complete edge-weighted bipartite graph that has a node for each agent and for each bundle and the weight of an edge indicates the valuation of an agent for a bundle. As $n$ is constant, this takes constant time. Also, the linear programs have constant size. In general, since $n$ is a constant, it is ignored in the $\bigO$ notation unless it appears in the exponent. The above discussion is summarized in the next statement.

\begin{theorem}\label{thm:upper}
	Let $\epsilon>0$ be the accuracy parameter used by the algorithm. Given an instance of \problem\ consisting of a constant number $n$ of agents with valuations $v$ over a set $M$ of $m$ items that has an envy-freeable allocation using an amount $\chi$ of subsidies, the algorithm runs in time $\bigO\left(\left(m/\epsilon\right)^{n^2+2}\right)$ and computes an allocation that is envy-freeable using a total subsidy of at most $\chi+\epsilon \max{v}$.
\end{theorem}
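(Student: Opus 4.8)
The plan is to obtain Theorem~\ref{thm:upper} essentially as a corollary of the preceding lemma together with a phase-by-phase accounting of the algorithm's running time; no new conceptual ingredient is needed beyond what the lemma already supplies. First I would record that the algorithm is correct in the sense of producing an envy-freeable allocation: by Theorem~\ref{thm:hs19} every redistributed allocation $A'_\tau$ is envy-freeable, and $\LP(A'_\tau,v)$ certifies the exact minimum subsidy for it, so the output is well defined and envy-freeable. The lemma then guarantees that, among the representatives examined, the one drawn from the class $\mathcal{A}_\tau$ containing an optimal allocation $O$ (which is non-empty, hence $\T(\tau)=1$, hence its representative is among those redistributed and LP-solved) is envy-freeable with subsidies at most $\chi+4mn^2\discretization$; since the algorithm outputs the cheapest representative, its output does no worse.

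For the quality bound I would then simply substitute the value $\discretization=\frac{\epsilon\max{v}}{4mn^2}$ fixed at the start of the algorithm into the lemma's bound, which turns $4mn^2\discretization$ into exactly $\epsilon\max{v}$. This yields the claimed subsidy guarantee of $\chi+\epsilon\max{v}$, and this step is pure bookkeeping.

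The running-time estimate I would carry out by bounding each of the five phases separately. (i) Table $\T$ has one coordinate ranging over $\{1,\dots,m\}$ and $n^2$ coordinates each ranging over $\{0,\dots,\lfloor\max{v}/\discretization\rfloor\}$; since $\lfloor\max{v}/\discretization\rfloor=\lfloor 4mn^2/\epsilon\rfloor$ and $n$ is constant, $\T$ has $\bigO((m/\epsilon)^{n^2+1})$ entries. (ii) Filling one entry via the recursive relation means, for each of the $n$ choices of the agent $k$ receiving $g_t$, forming the unique candidate predecessor tuple $\tau'$ (which differs from $\tau$ only in the $k$-th column of the $P$-matrix) and looking it up; this is polynomial in $n$, hence $\bigO(1)$ for constant $n$. (iii) Extracting one representative allocation walks $t$ from $m$ down to $1$, spending $\bigO(n^2)$ per item to identify the receiving agent and the matching predecessor tuple, for $\bigO(m)$ total. (iv) Redistributing the bundles of a representative to maximize social welfare under the original valuations $v$ is a maximum-weight perfect matching on a balanced bipartite graph with $2n$ nodes, and (v) $\LP(A'_\tau,v)$ has $n$ variables and $\bigO(n^2)$ constraints; both have constant size for constant $n$, hence cost $\bigO(1)$. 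Since phases (iii)--(v) run at most once per table entry and cost $\bigO(m)$ together, the total is dominated by $\bigO((m/\epsilon)^{n^2+1})$ entries times $\bigO(m)$ work each; using $m\le m/\epsilon$ (for $\epsilon\le 1$, the regime of interest) this is $\bigO((m/\epsilon)^{n^2+2})$, as stated.

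I do not expect a genuine obstacle here: the one piece of care is to make sure each of the $n^2$ ``valuation'' coordinates really needs range only $\bigO(m/\epsilon)$, which follows because $\tilde{v}_i(B)\le v_i(B)\le v_i(M)\le\max{v}$ for every bundle $B$, so $P_{ij}\in\{0,\dots,\lfloor\max{v}/\discretization\rfloor\}$; and to keep the treatment of $n$ as a constant consistent throughout the count, absorbing all $n^{\bigO(1)}$ factors. The substantive content --- that discretization does not inflate the required subsidies by more than $4mn^2\discretization$ --- has already been established in the preceding lemma.
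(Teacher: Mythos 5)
Your proposal is correct and follows essentially the same route as the paper: the subsidy bound is obtained by substituting $\discretization=\frac{\epsilon\max{v}}{4mn^2}$ into the preceding lemma's guarantee of $\chi+4mn^2\discretization$, and the running time is the same phase-by-phase count (table of size $\bigO((m/\epsilon)^{n^2+1})$, constant work per entry for constant $n$, $\bigO(m)$ per representative, constant-size matching and LP). Nothing is missing.
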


\section{Hardness of approximating \problem}\label{sec:hardness}
In this section, we show that approximation guarantees like the one in the statement of Theorem~\ref{thm:upper} are not possible when the number of agents is part of the input.

\begin{theorem}\label{thm:hardness}
	Approximating \problem\ within an additive term of $3\cdot 10^{-4} \sumv{v}$ is NP-hard.
\end{theorem}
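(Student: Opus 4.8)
The plan is to construct an approximation-preserving reduction from a known NP-hard optimization problem whose objective has a natural additive gap. A good candidate is a restricted version of \textsc{Max Clique}/\textsc{Vertex Cover} or, more naturally for this subsidy setting, a coloring-type or partition-type problem; but the cleanest route is a reduction from a \textsc{3-SAT}-flavoured gap problem, for instance \textsc{Max-3-SAT} where it is NP-hard to distinguish fully satisfiable instances from those where at most a $(1-c)$-fraction of clauses can be satisfied (by H\aa stad's theorem, $c$ can be taken close to $1/8$). The idea is to build an allocation instance where a "perfectly satisfying" assignment yields an envy-freeable allocation requiring zero (or tiny) subsidies, while any allocation corresponding to an assignment violating a constant fraction of clauses necessarily creates a positive-weight obstruction that forces a subsidy of at least some constant times $\sumv{v}$.

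First I would set up gadgets: one agent (or a small block of agents) per variable, encoding the truth value by which of two designated "literal items" that agent receives; one agent per clause whose valuations are crafted so that the clause agent is envious unless one of its three satisfying literal items has been allocated consistently; and a collection of "dummy" items and "dummy" agents used purely to calibrate $\sumv{v}$ so that the additive slack $3\cdot 10^{-4}\sumv{v}$ lands on the right side of the threshold. The key structural fact to invoke is the third bullet of Theorem~\ref{thm:hs19}: an allocation is envy-freeable iff its envy graph has no positive-weight directed cycle, and more quantitatively the minimum subsidy equals $\sum_i \pi_i$ where $\pi_i$ is the max-weight simple path out of $i$ in $\EG(X,v)$. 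So in the completeness case I would exhibit, given a satisfying assignment, an allocation whose envy graph has all path weights $\le 0$ (hence $\chi=0$), and in the soundness case I would show that any allocation — even after the optimal redistribution of bundles guaranteed by Theorem~\ref{thm:hs19} — inherits, from the $\ge c$-fraction of unsatisfied clauses, a set of positive-weight paths whose total forces $\chi \ge 3\cdot 10^{-4}\sumv{v}$ (after choosing the gadget weights and the number of dummies to make the arithmetic work; the constant $3\cdot 10^{-4}$ is small precisely because of the loss incurred in translating the H\aa stad gap and in normalizing by $\sumv{v}$ rather than by the per-clause value).

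The reduction has to be approximation-preserving in the additive sense: I need (i) $\chi = 0$ when the formula is satisfiable, (ii) $\chi > 3\cdot 10^{-4}\sumv{v}$ when at most a $(1-c)$-fraction of clauses is satisfiable, and (iii) $\sumv{v}$ bounded polynomially in the instance size so that an additive-$3\cdot10^{-4}\sumv{v}$ approximation algorithm would decide the gap. I would also double-check that the reduction runs in polynomial time and that valuations stay non-negative (or rescale; the instance can use integer valuations polynomially bounded in $m+n$). One subtlety is that an adversarial allocation is free to ignore the intended semantics entirely — e.g. give all literal items of a variable to a clause agent, or shuffle dummy items — so the gadget weights must be chosen so that every "cheating" move either (a) does not help avoid the clause-induced positive cycle, or (b) itself creates an even larger positive-weight path; verifying this robustness is the heart of the argument.

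The main obstacle I expect is exactly this soundness robustness: proving the lower bound $\chi \ge 3\cdot 10^{-4}\sumv{v}$ against \emph{all} allocations, not just the "honest" ones, and doing so after the optimal redistribution that envy-freeability permits (recall the algorithm of Section~\ref{sec:alg} always redistributes bundles to maximize social welfare, so the hardness instance must be immune to that too). Concretely I would argue: fix any allocation $X$; consider the induced assignment (reading off, for each variable agent, which literal item it holds, breaking ties arbitrarily); at least $cK$ clauses are unsatisfied by this assignment, where $K$ is the number of clauses; for each such clause I would identify a short directed cycle in $\EG(X,v)$ — through the clause agent and the relevant variable agents — of strictly positive weight $w>0$, and show these cycles are essentially disjoint or at least contribute additively to $\sum_i\pi_i$ via Theorem~\ref{thm:hs19}; then $\chi \ge cK\cdot w$, and picking the dummy mass so that $\sumv{v} \le cKw / (3\cdot 10^{-4})$ finishes it. Getting the cycles to be simultaneously present and additively lower-bounding the LP objective — rather than one giant cycle that a single small subsidy kills — is the delicate calibration step, and is where the specific constant $3\cdot 10^{-4}$ will come from.
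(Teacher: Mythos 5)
Your proposal is a plan rather than a proof: the construction of the gadgets, the completeness argument, and above all the soundness lower bound are all deferred, and you yourself flag the soundness step as ``the heart of the argument'' and ``the delicate calibration step'' without carrying it out. More importantly, the mechanism you propose for soundness is wrong in kind. You want each unsatisfied clause to induce ``a short directed cycle in $\EG(X,v)$ of strictly positive weight $w>0$'' that ``contributes additively to $\sum_i\pi_i$.'' But by Theorem~\ref{thm:hs19} (and by summing the LP constraints $\pi_i-\pi_j\geq v_i(X_j)-v_i(X_i)$ around a cycle, where the left-hand sides telescope to zero), a positive-weight directed cycle makes the allocation \emph{not envy-freeable at all}: no payment vector exists, so such an allocation is simply infeasible for \problem\ and contributes nothing to a lower bound on $\chi$. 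The adversary avoids these allocations for free by redistributing bundles to maximize social welfare, exactly as you note the algorithm of Section~\ref{sec:alg} does. A valid soundness argument must lower-bound $\sum_i\pi_i$ over the \emph{envy-freeable} allocations, i.e., it must exhibit many distinct agents $i$, each the origin of a positive-weight simple path (already a single edge suffices: some other agent holds a bundle that $i$ values strictly more than her own), and these per-agent requirements must be summed over disjoint sets of agents. ``Positive cycle'' and ``positive path out of $i$'' play opposite roles here, and your sketch rests on the former where it needs the latter.

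The paper's proof takes a different route that is built precisely around this point: it reduces from MAX-3DM-2 (each element in exactly two triplets, using the gap of Chleb\'{\i}k and Chleb\'{\i}kov\'a) rather than from MAX-3SAT. A threshold gadget (items $\Lambda$ and $\Theta_t$, agents $1,2,3$ with values $\chi$ and $\chi K$) forces some designated agent to require a subsidy of roughly $\chi\max\{K-\theta,1\}$ unless agent $3$ collects about $K$ of the $\Theta_t$ items, and a case analysis over the four types of triplets (full/partial, supported/unsupported) shows that every triplet outside a 3D matching charges at least $\chi$ to its own disjoint triple of agents $J_1(t),J_2(t),J_3(t)$; summing these disjoint contributions gives $\SUB(X,v)\geq\chi(1+\max\{K-L,0\})$ for \emph{every} envy-freeable $X$, matched by an explicit allocation from a maximum matching. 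The bounded-occurrence restriction is what keeps $\sumv{v}<30n\chi$ so that the gap $0.01n\chi$ translates into the additive term $3\cdot10^{-4}\sumv{v}$. Your MAX-3SAT starting point is not inherently unworkable, but to salvage it you would need to (i) replace the cycle-based soundness mechanism with per-agent path lower bounds over welfare-maximizing allocations, (ii) use a bounded-occurrence SAT variant so that $\sumv{v}$ stays linear in the number of clauses, and (iii) actually perform the case analysis over all ways an adversarial allocation can scatter the gadget items -- which is where all of the work in the paper's proof lies.
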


We prove Theorem~\ref{thm:hardness} by presenting a reduction from Maximum 3-Dimensional Matching (MAX-3DM). An instance of MAX-3DM consists of three disjoint sets of elements $A=\{a_1, a_2, ..., a_n\}$, $B=\{b_1, b_2, ..., b_n\}$, and $C=\{c_1, c_2, ..., c_n\}$, each of size $n$, and a set $T$ of $m$ triplets of the form $(a_i,b_j,c_k)$ with $a_i\in A$, $b_j\in B$, and $c_k\in C$. The objective is to compute a disjoint subset of $T$ (or, simply, a 3D matching) of maximum size. The problem is well-known to be NP-hard not only to solve exactly \citep{GJ79} but also to approximate~\citep{K91}.

We will use the inapproximability result of~\cite{CC06}, which applies to bounded instances of MAX-3DM in which each element appears in exactly two triplets (i.e., $m=2n$); we will refer to this restriction of MAX-3DM as MAX-3DM-2. In particular, \cite{CC06} show that it is NP-hard to distinguish between instances of MAX-3DM-2 with a 3D matching of size at least $K$ and instances of MAX-3DM-2 in which any 3D matching has size at most $K-0.01n$.\footnote{This statement is actually weaker than the one proved by~\cite{CC06}. However, it suffices for our purpose to prove hardness of approximation. Note that we have made no particular attempt to optimize our inapproximabity threshold.}

\subsection{The reduction}
We present our reduction and full proof for the case $\chi>0$. We omit the case $\chi=0$, which requires a minor modification of the reduction. 
On input an instance of MAX-3DM-2, our reduction constructs in polynomial time an instance of \problem, in which the minimum amount of susbsidies that can make some allocation envy-free is exactly $\chi(1+\max\{K-L,0\})$, where $L$ is the size of the maximum 3D matching in the MAX-3DM-2 instance. Using the result of~\cite{CC06}, we will get that it is NP-hard to distinguish between \problem\ instances in which the minimum amount of subsidies is at most $\chi$ and instances in which it is at least $\chi(1+0.01n)$. Hence, \problem\ will be proved to be NP-hard to approximate within $0.01n\chi$. Our construction will be such that $\sumv{v}< 30n\chi$. In this way, we will obtain a hardness of approximating \problem\ within an additive term of (at least) $3\cdot 10^{-4} \sumv{v}$, as desired.

Our reduction is as follows. Given an instance of MAX-3DM-2 consisting of sets of elements $A$, $B$, and $C$, each of size $n$, and a set of $2n$ triplets $T$, the instance of \problem\ has 
\begin{itemize}
	\item three agents 1, 2, and 3,
	\item three agents $J_1(t)$, $J_2(t)$, and $J_3(t)$ for every triplet $t\in T$,
	\item an item $A_i$ for every element $a_i\in A$,
	\item an item $B_i$ for every element $b_i\in B$,
	\item an item $\Gamma_i$ for every element $c_i\in C$,
	\item three items $\Delta_t$, $Z_t$, and $\Theta_t$ for every triplet $t\in T$, and 
	\item an additional item $\Lambda$.
\end{itemize}

The agents $J_1(t)$, $J_2(t)$, and $J_3(t)$ that correspond to the triplet $t=(a_i,b_j,c_j)$ have valuations $0$ for all items besides the items $A_i$, $B_j$, $\Gamma_k$, $\Delta_t$, $Z_t$, and $\Theta_t$. Agents 1, 2 have valuation $0$ for all items besides item $\Lambda$ and agent 3 has valuation zero for all items besides item $\Lambda$ and items $\Theta_t$ for $t\in T$. Their remaining valuations are as follows:

\begin{center}
	\begin{tabular}{r|ccccccc}
		& $A_i$ & $B_j$ & $\Gamma_k$ & $\Delta_t$ & $Z_t$ & $\Theta_t$ & $\Lambda$ \\\hline
		$1$       & $0$ & $0$ & $0$ & $0$ & $0$ & $0$ & $\chi$ \\
		$2$      &  $0$ & $0$ & $0$ & $0$ & $0$ & $0$ & $\chi K$ \\
		$3$      &  $0$ & $0$ & $0$ & $0$ & $0$ & $\chi$ & $\chi K$ \\
		$J_1(t)$ & $\chi$ & $\chi$ & $\chi$ & $3\chi$ & $3\chi$ & $0$ & $0$\\
		$J_2(t)$ & $0$ & $0$ & $0$ & $\chi$ & $\chi$ & $\chi$ & $0$\\
		$J_3(t)$ & $0$ & $0$ & $0$ & $0$ & $\chi$ & $0$ & $0$
	\end{tabular}
\end{center}
Recall that each element belongs to exactly two triplets. Hence, two agents have positive value for item $A_i$ (similarly for items $B_j$ and $\Gamma_k$): agents $J_2(t_1)$ and $J_2(t_2)$ such that the triplets $t_1$ and $t_2$ contains element $a_i$ (similarly for elements $b_j$ and $c_k$). It is easy to see that either two or three agents have positive value for each item. For every triplet $t$, the agents $J_1(t)$, $J_2(t)$, and $J_3(t)$ have total valuation $9\chi$, $3\chi$, and $\chi$, respectively. Taking into account that $K\leq n$, we obtain that $\sumv{v}<30n\chi$.

\subsection{Lower bound on subsidies}
Consider an instance of \problem\ constructed by our reduction and let $X$ be an envy-freeable allocation in it. We will first lower-bound the minimum amount of subsidies that make $X$ envy-free. First observe that $X$ cannot give item $\Lambda$ to agent 1; in that case, exchanging the bundles of agents 1 and 2 would result to an increase of the social welfare and, hence, $X$ would not be envy-freeable. If $X$ gives item $\Lambda$ to agent 3, agents 1 and 2 would need subsidies of at least $\chi$ and $\chi K$, respectively, so that they do not envy agent 3. Hence, $\SUB(A,v)\geq \chi(1+K)$ in this case. 

In the following, we will lower-bound the minimum total subsidies that make $X$ envy-free assuming that item $\Lambda$ is given to agent 2. Let $\theta$ be the number of items $\Theta_t$ for $t\in [2n]$ agent 3 gets. Then, agent 3 should be given a subsidy of at least $\chi\max\{K-\theta,0\}$ so that she does not envy agent 2. Agent 1 needs a subsidy of $\chi\max\{K-\theta,1\}$ so that she does not envy agents 1 and 2.

For a triplet $t=(a_i,b_j,c_k)$ in the original instance of MAX-3DM-2, we call it {\em full} if all items $A_i$, $B_j$, and $\Gamma_k$ (which correspond to the elements of the triplet) have been allocated to  the agents $J_1(t)$, $J_2(t)$, or $J_3(t)$. Otherwise, we call it {\em partial}. We call $t$ {\em supported} if item $\Theta_t$ has been allocated to agent $J_2(t)$; otherwise, we call $t$ {\em unsupported}.

In the next four claims, we lower-bound the total amount of subsidies the agents $J_1(t)$, $J_2(t)$, and $J_3(t)$ of a triplet $t$ need, depending of the type of $t$. 

\begin{claim}\label{claim:1}
	The agents $J_1(t)$, $J_2(t)$, and $J_3(t)$ of a full and supported triplet $t$ need subsidies of at least $\chi\max\{K-\theta-2,0\}$.
\end{claim}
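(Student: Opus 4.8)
\textbf{Proof plan for Claim~\ref{claim:1}.}

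The plan is to analyze the envy graph restricted to the three agents $J_1(t), J_2(t), J_3(t)$ of the full and supported triplet $t$, and to exhibit a path in this graph whose weight is large, forcing a large subsidy on one of these agents. Recall that for a full triplet, items $A_i$, $B_j$, $\Gamma_k$ all went to agents in $\{J_1(t), J_2(t), J_3(t)\}$, and since these items are worth $0$ to $J_2(t)$ and $J_3(t)$, only their allocation to $J_1(t)$ matters for those three agents' valuations. For a supported triplet, item $\Theta_t$ went to $J_2(t)$. The remaining relevant items for this triple are $\Delta_t$ and $Z_t$, which must be distributed among $J_1(t), J_2(t), J_3(t)$ (they are worth $0$ to everyone else, so envy-freeability — by the welfare-maximization characterization of Theorem~\ref{thm:hs19} — forces them to agents valuing them positively, i.e., to these three).

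First I would set up the bookkeeping: let $J_1(t)$ receive some subset of $\{A_i, B_j, \Gamma_k\}$ together with possibly $\Delta_t$ and/or $Z_t$; note $v_{J_1(t)}$ values each of $A_i,B_j,\Gamma_k$ at $\chi$ and each of $\Delta_t, Z_t$ at $3\chi$. Then I would case-split on where $\Delta_t$ and $Z_t$ land. The key observation is that agent $3$ already holds $\Lambda$ (worth $\chi K$ to agent $3$) plus $\theta$ items $\Theta_t$, so in the envy graph agent $3$ has an edge of weight roughly $\chi(K-\theta)$ toward agent $1$ (recall agent $1$'s subsidy was shown to be $\chi\max\{K-\theta,1\}$, reflecting a long path through agents $1$, $2$, $3$ and beyond). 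I would then chain: starting from $J_1(t)$, follow the edge into whichever of $J_2(t), J_3(t)$ currently holds $\Delta_t$ or $Z_t$ (agent $J_1(t)$ envies that agent because $3\chi$ exceeds what $J_1(t)$ got from the element-items, which is at most $3\chi$ but the comparison is strict in the bad configurations), then follow an edge from that $J$-agent toward agent $3$ via $\Theta_t$ if applicable ($J_2(t)$ values $\Theta_t$ at $\chi$), and finally append the agent-$3$-to-agent-$1$ path of weight $\chi(K-\theta)$. Summing, the path from $J_1(t)$ has weight at least $\chi(K-\theta) - 2\chi = \chi(K-\theta-2)$ after accounting for the at-most-$2\chi$ deficit incurred along the two $J$-edges (coming from the two element-items $J_1(t)$ might be holding, each worth $\chi$). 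Since $\LP(X,v)$ gives $J_1(t)$ a subsidy equal to the maximum simple-path weight from its node (Theorem~\ref{thm:hs19} and the remark after~\eqref{eq:LP}), and since subsidies are nonnegative, $J_1(t)$'s subsidy is at least $\chi\max\{K-\theta-2,0\}$. Because subsidies to the other two agents are nonnegative, the total over the three agents is at least $\chi\max\{K-\theta-2,0\}$, which is exactly the claim.

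The main obstacle I anticipate is handling the configuration where $J_1(t)$ itself receives both $\Delta_t$ and $Z_t$ (so the $-2\chi$ slack is "spent" differently): there the direct edge from $J_1(t)$ into a $J$-agent is not available, and one must instead route the witnessing path through agent $3$ in a way that still costs only $2\chi$ relative to $\chi(K-\theta)$ — I would argue this by noting that in that case $J_1(t)$ holds at most one element-item, so $J_1(t)$'s own bundle value is at most $7\chi$, while it envies, say, agent $3$ (who holds $\Theta$-items and $\Lambda$ worth $0$ to $J_1(t)$) only weakly; the cleaner route is that some other $J$-agent or agent $1$ forms the needed edge. A careful enumeration of the (few) cases for the destinations of $\Delta_t, Z_t$ and the contents of $J_1(t)$'s bundle, each time exhibiting an explicit path, will close the argument; the slack of $2\chi$ is exactly the two element-item values that can be "lost" along the way, which is why the bound degrades by $2$ compared with the bare $\chi(K-\theta)$ bound.
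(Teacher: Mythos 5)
There is a genuine gap: the witnessing path originates at the wrong agent. You want to show that $J_1(t)$ alone needs a subsidy of $\chi\max\{K-\theta-2,0\}$, but this is false. Consider the configuration where $J_1(t)$ holds $A_i$, $B_j$, $\Gamma_k$, and $\Delta_t$ while $J_2(t)$ holds $\Theta_t$ and $Z_t$. The edge from $J_1(t)$ to $J_2(t)$ then has weight $3\chi-6\chi=-3\chi$, the edge from $J_2(t)$ to agent $3$ has weight $-2\chi$, and appending the edge from agent $3$ to agent $2$ (who holds $\Lambda$) adds $\chi(K-\theta)$; every path out of $J_1(t)$ has weight at most $\chi(K-\theta)-5\chi$, so for, say, $K-\theta=4$, the LP can give $J_1(t)$ a subsidy of $0$ even though the claim requires a total of $2\chi$. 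The correct witnessing agent is $J_2(t)$: unless she holds both $\Delta_t$ and $Z_t$, her own bundle is worth at most $2\chi$ to her, so the two-edge path $J_2(t)\to 3\to 2$ already has weight at least $\chi(K-\theta)-2\chi$; in the exceptional case where she holds $\Theta_t$, $\Delta_t$, and $Z_t$ (value $3\chi$), her subsidy is only guaranteed to be $\chi\max\{K-\theta-3,0\}$, but then $J_1(t)$ and $J_3(t)$ envy her by $3\chi$ and $\chi$ respectively, and \emph{summing} the three subsidies recovers the bound. This case split on $J_2(t)$'s bundle is exactly what the paper does. Your slack accounting ("two element-items worth $\chi$ each") is also off: $J_1(t)$ may hold all three element-items and one of $\Delta_t,Z_t$, so her outgoing edge can lose $3\chi$ or more, which is precisely why the $J_1(t)$-rooted path fails.

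Two further problems. First, you have the $\Lambda$ configuration backwards: in the case these claims address, $\Lambda$ is held by agent $2$, agent $3$ envies agent $2$ by $\chi(K-\theta)$, and that is the edge every witnessing path must reach; the scenario where agent $3$ holds $\Lambda$ is the separate, already-dispatched case giving a lower bound of $\chi(1+K)$ directly. Second, envy-freeability does not force $\Delta_t$ and $Z_t$ onto agents who value them positively: the welfare-maximization characterization of Theorem~\ref{thm:hs19} is over redistributions of \emph{whole bundles}, so $\Delta_t$ could ride along with $\Lambda$ in agent $2$'s bundle without creating a profitable swap. The paper's proof avoids relying on this by casing only on the value $J_2(t)$ derives from her own bundle.
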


\begin{proof}
	Consider a full and supported triplet $t$. If agent $J_2(t)$ has value at most $2\chi$ (i.e., getting $\Theta_t$ and at most one of the items $\Delta_t$ and $Z_t$), then she needs a subsidy of at least $\chi\max\{K-\theta-2,0\}$ so that she does not envy agent 3. If agent $J_2(t)$ has value $3\chi$ by getting both items $\Delta_t$ and $Z_t$ in addition to $\Theta_t$, she needs a subsidy of at least $\chi\max\{K-\theta-3,0\}$, while then agents $J_1(t)$ and $J_3(t)$ need subsidies of at least $3\chi+\chi\max\{K-\theta-3,0\}$ and $\chi+\chi\max\{K-\theta-3,0\}$, respectively, so that they do not envy agent $J_2(t)$. In both cases, the total amount of subsidies of the agents $J_1(t)$, $J_2(t)$, and $J_3(t)$ is at least $\chi\max\{K-\theta-2,0\}$.
\end{proof}

\begin{claim}\label{claim:2}
	The agents $J_1(t)$, $J_2(t)$, and $J_3(t)$ of a full and unsupported triplet $t$ need subsidies of at least $\chi\max\{K-\theta-1,0\}$.
\end{claim}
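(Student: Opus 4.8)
The plan is to mirror the proof of Claim~\ref{claim:1}, exploiting the one structural difference created by unsupportedness: since $\Theta_t\notin X_{J_2(t)}$, agent $J_2(t)$'s value for her own bundle is at most $2\chi$ (only the possible contributions of $\Delta_t$ and $Z_t$) rather than the $3\chi$ available in the supported case. This single‑unit drop is exactly what shifts the guarantee from $\chi\max\{K-\theta-2,0\}$ to $\chi\max\{K-\theta-1,0\}$. As in Claim~\ref{claim:1}, I would lower‑bound the envy of $J_2(t)$ towards the global agent $3$, and, when $J_2(t)$'s bundle is too valuable for that comparison to bite, pass the resulting deficit to $J_1(t)$ along the envy‑graph edge $(J_1(t),J_2(t))$.

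Concretely, I would first invoke the global bound already established for the case under analysis (item $\Lambda$ allocated to agent $2$): agent $3$ needs $\pi_3\ge\chi\max\{K-\theta,0\}$. I would then split according to how many items of $\{\Delta_t,Z_t\}$ agent $J_2(t)$ holds; since $t$ is unsupported this is the only freedom in $J_2(t)$'s own value.

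If $J_2(t)$ holds at most one of $\Delta_t,Z_t$, then $v_{J_2(t)}(X_{J_2(t)})\le\chi$, and the envy‑freeness inequality of $J_2(t)$ towards agent $3$, namely $\pi_{J_2(t)}\ge v_{J_2(t)}(X_3)-v_{J_2(t)}(X_{J_2(t)})+\pi_3$, gives (using $v_{J_2(t)}(X_3)\ge 0$) $\pi_{J_2(t)}\ge -\chi+\chi\max\{K-\theta,0\}$; combined with $\pi_{J_2(t)}\ge 0$ this is at least $\chi\max\{K-\theta-1,0\}$, and we are done since $\pi_{J_1(t)},\pi_{J_3(t)}\ge 0$. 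If instead $J_2(t)$ holds both $\Delta_t$ and $Z_t$, then $J_1(t)$ holds neither, so $v_{J_1(t)}(X_{J_1(t)})\le 3\chi$ (at most $A_i,B_j,\Gamma_k$) while $v_{J_1(t)}(X_{J_2(t)})\ge 6\chi$ (the values $3\chi$ each that $J_1(t)$ assigns to $\Delta_t$ and $Z_t$). The envy‑freeness inequality of $J_1(t)$ towards $J_2(t)$ then yields $\pi_{J_1(t)}\ge 6\chi-3\chi+\pi_{J_2(t)}=3\chi+\pi_{J_2(t)}$, and the inequality of $J_2(t)$ towards agent $3$ now gives only $\pi_{J_2(t)}\ge -2\chi+\chi\max\{K-\theta,0\}\ge\chi\max\{K-\theta-2,0\}$ (using $\pi_{J_2(t)}\ge 0$), whence $\pi_{J_1(t)}\ge 3\chi+\chi\max\{K-\theta-2,0\}$. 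The argument then closes with the elementary inequality $3\chi+\chi\max\{K-\theta-2,0\}\ge\chi\max\{K-\theta-1,0\}$ — the two truncated terms differ by at most $\chi$, and the $3\chi$ surplus absorbs the corner cases where $K-\theta$ is small — again using $\pi_{J_2(t)},\pi_{J_3(t)}\ge 0$.

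I do not expect a genuine obstacle; the conceptual crux (routing the deficit from $J_2(t)$ to $J_1(t)$ when $J_2(t)$ is ``rich'') is inherited from Claim~\ref{claim:1}, and everything else is bookkeeping. The points that need care are keeping the truncations $\max\{\cdot,0\}$ consistent — every inequality used is a one‑sided lower bound read off an envy‑graph edge, so discarding the nonnegative term $v_{J_2(t)}(X_3)$ and, whenever a bound becomes negative, falling back on $\pi\ge 0$ are both legitimate — and observing that fullness of $t$ is not actually invoked in this estimate (it matters only later, when the per‑triplet bounds are aggregated); the argument works for any triplet with $\Theta_t\notin X_{J_2(t)}$.
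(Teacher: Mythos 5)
Your proof is correct and follows essentially the same route as the paper's: a two-case split on whether $J_2(t)$ holds one or both of $\Delta_t,Z_t$, bounding $\pi_{J_2(t)}$ via her envy of agent $3$ in the first case and routing the deficit to $\pi_{J_1(t)}$ via the edge $(J_1(t),J_2(t))$ in the second. Your side remark that fullness is not actually needed for this particular bound is also accurate (the paper's Claim~\ref{claim:4} gives an even stronger bound in the partial unsupported case).
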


\begin{proof}
	Consider a full and unsupported triplet $t$. If agent $J_2(t)$ has value at most $\chi$ (i.e., getting at most one of the items $\Delta_t$ and $Z_t$), then she needs a subsidy of at least $\chi\max\{K-\theta-1,0\}$ so that she does not envy agent 3. If agent $J_2(t)$ has value $2\chi$ by getting both items $\Delta_t$ and $Z_t$, she needs a subsidy of at least $\chi\max\{K-\theta-2,0\}$, while then agents $J_1(t)$ and $J_3(t)$ need subsidies of at least $3\chi+\chi\max\{K-\theta-2,0\}$ and $\chi+\chi\max\{K-\theta-2,0\}$, respectively, so that they do not envy agent $J_2(t)$. In both cases, the total amount of subsidies of agents $J_1(t)$, $J_2(t)$, and $J_3(t)$ is at least $\chi\max\{K-\theta-1,0\}$. 
\end{proof}

\begin{claim}\label{claim:3}
	The agents $J_1(t)$, $J_2(t)$, and $J_3(t)$ of a partial and supported triplet $t$ need subsidies of at least $\chi\max\{K-\theta-1,0\}$.
\end{claim}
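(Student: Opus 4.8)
The plan is to imitate the proofs of Claims~\ref{claim:1} and~\ref{claim:2}. Fix an envy-freeable allocation $X$ in which item $\Lambda$ is given to agent~$2$, let $\pi$ be an optimal solution of $\LP(X,v)$ (so all the inequalities $\pi_i-\pi_j\ge\wgt_H(i,j)$ with $H=\EG(X,v)$, as well as $\pi\ge 0$, are available), and recall the already-established bound $\pi_3\ge\chi\max\{K-\theta,0\}$. Since $t$ is supported, $\Theta_t\in X_{J_2(t)}$, and since $J_2(t)$ assigns positive value (namely $\chi$) only to $\Theta_t,\Delta_t,Z_t$, the quantity $\alpha:=v_{J_2(t)}(X_{J_2(t)})$ is $\chi$, $2\chi$, or $3\chi$ according to how many of $\Delta_t,Z_t$ also belong to $X_{J_2(t)}$. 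As $v_{J_2(t)}(X_3)\ge 0$, the constraint for the pair $(J_2(t),3)$ gives $\pi_{J_2(t)}\ge\pi_3-\alpha\ge\chi\max\{K-\theta,0\}-\alpha$. I would then split on the value of $\alpha$.

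If $\alpha=\chi$, then $\pi_{J_2(t)}\ge\chi\max\{K-\theta,0\}-\chi\ge\chi\max\{K-\theta-1,0\}$ and we are done. If $\alpha=3\chi$, then $\Delta_t,Z_t\in X_{J_2(t)}$, so $v_{J_1(t)}(X_{J_2(t)})\ge 6\chi$; and since $t$ is partial, one of $A_i,B_j,\Gamma_k$ lies outside $\{X_{J_1(t)},X_{J_2(t)},X_{J_3(t)}\}$, so $J_1(t)$ holds at most two of $A_i,B_j,\Gamma_k$ and neither $\Delta_t$ nor $Z_t$, giving $v_{J_1(t)}(X_{J_1(t)})\le 2\chi$. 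The constraint for $(J_1(t),J_2(t))$ then yields $\pi_{J_1(t)}\ge\pi_{J_2(t)}+4\chi\ge\chi\max\{K-\theta-3,0\}+4\chi\ge\chi\max\{K-\theta-1,0\}$.

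The case $\alpha=2\chi$ is the one I expect to be the main obstacle, since now $\pi_{J_2(t)}\ge\chi\max\{K-\theta-2,0\}$ by itself is short of the target by $\chi$, and I must recover that $\chi$ from $J_1(t)$ or $J_3(t)$ via the ``partial'' hypothesis, which forces sub-cases on the location of $Z_t$. Here $J_2(t)$ holds $\Theta_t$ and exactly one of $\Delta_t,Z_t$. (i)~If $Z_t\in X_{J_2(t)}$: since $J_3(t)$ values only $Z_t$ (at $\chi$) and does not hold it, $\wgt_H(J_3(t),J_2(t))\ge\chi$, so $\pi_{J_3(t)}\ge\pi_{J_2(t)}+\chi$. (ii)~If $\Delta_t\in X_{J_2(t)}$ and $Z_t\in X_{J_1(t)}$: then $\wgt_H(J_3(t),J_1(t))\ge\chi$, so $\pi_{J_3(t)}\ge\pi_{J_1(t)}+\chi\ge\chi$. (iii)~If $\Delta_t\in X_{J_2(t)}$ and $Z_t\notin X_{J_1(t)}$: then $J_1(t)$ holds neither $\Delta_t$ nor $Z_t$, and, being partial, at most two of $A_i,B_j,\Gamma_k$, so $v_{J_1(t)}(X_{J_1(t)})\le 2\chi\le v_{J_1(t)}(X_{J_2(t)})-\chi$ (using $\Delta_t\in X_{J_2(t)}$), hence $\pi_{J_1(t)}\ge\pi_{J_2(t)}+\chi$. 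In each sub-case the three subsidies sum to at least $\chi\max\{K-\theta-2,0\}+\chi$, and since $\max\{u-2,0\}+1\ge\max\{u-1,0\}$ for every real $u$, this is at least $\chi\max\{K-\theta-1,0\}$, which finishes the proof. The only routine points to check are that every inequality invoked is a genuine constraint of $\LP(X,v)$ (hence valid for its optimal solution) and that, in each sub-case, the relevant envy-graph weights are computed using the disjointness of the bundles and the fact that $J_3(t)$'s only valued item is $Z_t$.
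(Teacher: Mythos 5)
Your proof is correct and follows essentially the same route as the paper: a case analysis on how many of $\Delta_t,Z_t$ agent $J_2(t)$ holds, combining the envy constraint against agent 3 with envy constraints among $J_1(t),J_2(t),J_3(t)$ and the ``partial'' hypothesis to cap $v_{J_1(t)}(X_{J_1(t)})$ at $2\chi$. Your sub-cases (ii)--(iii), which track explicitly which agent holds $Z_t$, are in fact slightly more careful than the paper's corresponding sentence, but the argument is the same.
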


\begin{proof}
	Let $t$ be a partial and supported triplet. If agent $J_2(t)$ does not get items $\Delta_t$ and $Z_t$, then she gets only a value of $\chi$ from item $\Theta_t$ and needs a subsidy of at least $\chi\max\{K-\theta-1,0\}$ so that she does not envy agent 3. 
	
	If agent $J_2(t)$ gets item $\Delta_t$ but not item $Z_t$, she needs a subsidy of $\chi\max\{K-\theta-2,0\}$ so that she does not envy agent 3. Then, if agent $J_1(t)$ does not get item $Z_t$, her value is at most $2\chi$ (from at most two of the items $A_i$, $B_j$, and $\Gamma_k$) and needs a subsidy of $\chi+\chi\max\{K-\theta-2,0\}$ so that she does not envy agent $J_2(t)$. If agent $J_3(t)$ does not get item $Z_t$, she needs a subsidy of at least $\chi+\chi\max\{K-\theta-2,0\}$ so that she does not envy agent $J_2(t)$. 
	
	If agent $J_2(t)$ gets item $Z_t$ but not $\Delta_t$, she needs a subsidy of $\chi\max\{K-\theta-2,0\}$ so that she does not envy agent 3 and agent $J_3(t)$ needs a subsidy of at least $\chi+\chi\max\{K-\theta-2,0\}$ so that she does not envy agent $J_2(t)$. 
	
	Finally, if agent $J_2(t)$ gets items $\Delta_t$ and $Z_t$, her value is $3\chi$ and needs a subsidy of at least $\chi\max\{K-\theta-3,0\}$ so that she does not envy agent 3. Then, each of agents $J_1(t)$ and $J_3(t)$ need a subsidy of at least $\chi+\chi\max\{K-\theta-3,0\}$ so that they do not envy agent $J_2(t)$.
	
	In all cases, the total amount of subsidies the agents $J_1(t)$, $J_2(t)$, and $J_3(t)$ need is at least $\chi\max\{K-\theta-1,0\}$. 
\end{proof}

\begin{claim}\label{claim:4}
	The agents $J_1(t)$, $J_2(t)$, and $J_3(t)$ of a partial and unsupported triplet $t$ need subsidies of at least $\chi\max\{K-\theta,1\}$.
\end{claim}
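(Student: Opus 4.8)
The plan is to follow the template of Claims~\ref{claim:1}--\ref{claim:3}. Write $G=\EG(X,v)$. From envy-freeness of $(X,\pi)$ we have $\pi_i-\pi_j\ge\wgt_G(i,j)$ for all $i,j$, so, summing along any directed path $p$ that starts at a node $\ell$ and using $\pi\ge 0$, $\pi_\ell\ge\sum_{e\in p}\wgt_G(e)$; it therefore suffices to exhibit such paths. Two building blocks recur. (i)~Since $\Lambda$ is with agent~$2$ and $v_3(\{\Lambda\})=\chi K$, $v_3(X_3)=\theta\chi$, we get $\wgt_G(3,2)\ge\chi(K-\theta)$. (ii)~Since $t$ is partial and $A_i,B_j,\Gamma_k$ are worthless to $J_2(t)$ and $J_3(t)$, agent $J_1(t)$ holds at most two of $A_i,B_j,\Gamma_k$, so $v_{J_1(t)}(X_{J_1(t)})\le 2\chi$ plus whatever it collects from $\Delta_t$ and $Z_t$. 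As $t$ is unsupported, $\Theta_t\notin X_{J_2(t)}$, hence $v_{J_2(t)}(X_{J_2(t)})\in\{0,\chi,2\chi\}$; I would case on this value, noting that the path $J_2(t)\to 3\to 2$ always gives $\pi_{J_2(t)}\ge\max\{0,\chi(K-\theta)-v_{J_2(t)}(X_{J_2(t)})\}$.

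The extreme cases are quick. If $v_{J_2(t)}(X_{J_2(t)})=0$, then $\pi_{J_2(t)}\ge\chi(K-\theta)$; moreover $\Delta_t$ sits with some $u\ne J_2(t)$, so $\wgt_G(J_2(t),u)\ge\chi$ and $\pi_{J_2(t)}\ge\chi\max\{K-\theta,1\}$. If $v_{J_2(t)}(X_{J_2(t)})=2\chi$, i.e.\ $J_2(t)$ holds both $\Delta_t$ and $Z_t$, then $J_1(t)$ holds neither, so $v_{J_1(t)}(X_{J_1(t)})\le 2\chi$ while $v_{J_1(t)}(X_{J_2(t)})\ge 6\chi$, whence $\wgt_G(J_1(t),J_2(t))\ge 4\chi$ and the path $J_1(t)\to J_2(t)\to 3\to 2$ gives $\pi_{J_1(t)}\ge\max\{4\chi,\,2\chi+\chi(K-\theta)\}\ge\chi\max\{K-\theta,1\}$.

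The case $v_{J_2(t)}(X_{J_2(t)})=\chi$ is the crux: $J_2(t)$ holds exactly one of $\Delta_t,Z_t$, say $e$, and not the other, $f$. If $f\notin X_{J_1(t)}$, then $J_1(t)$ holds neither $\Delta_t$ nor $Z_t$, so $v_{J_1(t)}(X_{J_1(t)})\le 2\chi$ and $v_{J_1(t)}(X_{J_2(t)})\ge v_{J_1(t)}(e)=3\chi$, giving $\wgt_G(J_1(t),J_2(t))\ge\chi$; then $J_1(t)\to J_2(t)\to 3\to 2$ yields $\pi_{J_1(t)}\ge\max\{\chi,\chi(K-\theta)\}$. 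If $f\in X_{J_1(t)}$, then $Z_t$ lies in $X_{J_1(t)}$ or $X_{J_2(t)}$, and since $J_3(t)$ values only $Z_t$ she has value $0$ and envies the holder of $Z_t$ by at least $\chi$: if that holder is $J_2(t)$ (so $e=Z_t$), then $J_3(t)\to J_2(t)\to 3\to 2$ gives $\pi_{J_3(t)}\ge\max\{\chi,\chi(K-\theta)\}$; if it is $J_1(t)$ (so $e=\Delta_t$, $f=Z_t$), the chain out of $J_3(t)$ may stop at $J_1(t)$ --- whose bundle can be worth up to $5\chi$, so it need not envy anybody --- and I would settle for $\pi_{J_3(t)}\ge\chi$, then add the ``workhorse'' bound $\pi_{J_2(t)}\ge\chi\max\{K-\theta-1,0\}$ to conclude $\pi_{J_2(t)}+\pi_{J_3(t)}\ge\chi\max\{K-\theta,1\}$.

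The main obstacle is exactly this $v_{J_2(t)}(X_{J_2(t)})=\chi$ case: one must check that in \emph{every} sub-configuration there is an ``extra'' $\chi$ of envy that has not already been consumed by the path entering agent~$2$, and, in the single sub-configuration where no individual subsidy reaches the bound, that summing two of them does. Everything else --- verifying the paths are legitimate and collapsing the $\max\{\cdot,0\}$ and $\max\{\cdot,1\}$ terms --- is routine arithmetic, exactly as in Claims~\ref{claim:1}--\ref{claim:3}.
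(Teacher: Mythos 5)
Your proof is correct and follows essentially the same route as the paper: a case analysis on which of $\Delta_t$ and $Z_t$ agent $J_2(t)$ holds, chaining envy through agent~3 toward agent~2 (the holder of $\Lambda$) and adding the local envy of $J_1(t)$ or $J_3(t)$ toward the triplet agent holding the valuable items. Your path-based formulation is in fact slightly more careful in the one delicate sub-case ($J_2(t)$ holds only $\Delta_t$ while $Z_t$ sits with $J_1(t)$), where you correctly sum $\pi_{J_3(t)}\geq\chi$ with $\pi_{J_2(t)}\geq\chi\max\{K-\theta-1,0\}$ instead of attributing the whole bound to a single agent.
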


\begin{proof}
	Let $t$ be a partial and unsupported triplet. If agent $J_2(t)$ gets both items $\Delta_t$ and $Z_t$, she needs a subsidy of $\chi\max\{K-\theta-2,0\}$ so that she does not envy agent 3, while agents $J_1(t)$ and $J_3(t)$ would then need subsidies of at least $4\chi+\chi\max\{K-\theta-2,0\}$ and $\chi+\chi\max\{K-\theta-2,0\}$, respectively, so that they do not envy agent $J_2(t)$. 
	
	If agent $J_2(t)$ gets only item $\Delta_t$, she needs a subsidy of $\chi\max\{K-\theta-1,0\}$ so that she does not envy agent 3. Then, the agent who does not get item $Z_t$ among $J_1(t)$ and $J_3(t)$ would need a subsidy of at least $\chi+\chi\max\{K-\theta-1,0\}$ so that she does not envy agent $J_2(t)$. 
	
	If agent $J_2(t)$ gets only item $Z_t$, she needs a subsidy of $\chi\max\{K-\theta-1,0\}$ so that she does not envy agent 3, while agent $J_3(t)$ needs a subsidy of at least $\chi+\chi\max\{K-\theta-1,0\}$ so that she does not envy agent $J_2(t)$. 
	
	Finally, if agent $J_2(t)$ gets no item (among $\Delta_t$ and $Z_t$), she needs a subsidy of at least $\chi$ so that she does not envy the agents who get items $\Delta_t$ and $Z_t$ and a subsidy of at least $\chi\max\{K-\theta,0\}$ so that she does not envy agent 3. 
	
	In all cases, the total amount of subsidies the agents $J_1(t)$, $J_2(t)$, and $J_3(t)$ need is at least $\chi\max\{K-\theta,1\}$.
\end{proof}

We now denote by $L_1$, $L_2$, $P_1$, and $P_2$, the number of full and supported, full and unsupported, partial and supported, and partial and unsupported triplets defined by $X$, respectively. Notice that the full triplets form a 3D matching. Denoting by $L$ the maximum size over all 3D matchings  of the MAX-3DM-2 instance, we have $L\geq L_1+L_2$. Using Claims~\ref{claim:1}-\ref{claim:4}, and our observations for agents 1 and 3, we have that the total amount of subsidies $X$ needs to become envy-free is 
\begin{align}\nonumber
 \SUB(X,v)&\geq \chi \left(L_1\max\{K-\theta-2,0\}+L_2\max\{K-\theta-1,0\}\right.\\\nonumber
& \quad \left.+P_1\max\{K-\theta-1,0\}+P_2\max\{K-\theta,1\}\right.\\\label{eq:sub}
& \quad \left.+\max\{K-\theta,0\}+\max\{K-\theta,1\}\right). 
\end{align}

We will distinguish between two cases for $K-\theta$. If $K-\theta\geq 2$, (\ref{eq:sub}) yields
\begin{align*}
\SUB(X,v) &\geq \chi \left(L_2+P_1+2P_2+4\right)=\chi \left(2n-L_1+P_2+4\right) \geq \chi(1+\max\{K-L,0\}).
\end{align*}
Now, notice that $\theta$, the number of items $\Theta_t$ agent 3 gets in $X$ is upper-bounded by the number of unsupported triplets, i.e., $\theta \leq L_2+P_2$. Thus, if $K-\theta\leq1$, (\ref{eq:sub}) yields
 \begin{align*}
 \SUB(X,v) &\geq \chi \left(P_2+K-\theta+1\right) \geq \chi \left(K-L_2+1\right) \geq \chi(1+\max\{K-L,0\}).
 \end{align*}

We conclude that the minimum amount of subsidies necessary to make $X$ envy-free is at least $\chi(1+\max\{K-L,0\})$.

\subsection{Upper bound on minimum subsidies}
We now present our upper bound on the minimum amount of subsidies for envy-freeness. Given a 3D matching $\mathcal{M}$ of maximum size $L$ in the MAX-3DM-2 instance, we will construct an allocation for the \problem\ instance and will show that it is envy-freeable with an amount of subsidies equal to $\chi(1+\max\{K-L,0\})$.
	
	For defining the allocation, we partition $T\setminus \mathcal{M}$ in two disjoint sets of triplets $T_1$ and $T_2$ of size $2n-\max\{K,L\}$ and $\max\{K-L,0\}$, respectively.
\begin{itemize}
	\item For every triplet $t=(a_i,b_j,c_k) \in \mathcal{M}$, agent $J_1(t)$ gets items $A_i$, $B_j$, and $\Gamma_k$, agent $J_2(t)$ gets item $\Delta_t$ and agent $J_3(t)$ gets item $Z_t$.
	\item For every triplet $t=(a_i,b_j,c_k) \not\in \mathcal{M}$, let $F(t)$ be the set of items that correspond to the elements of $t$ that have not been included in triplets of $\mathcal{M}$. Note that, due to the maximality of $\mathcal{M}$, $F(t)$ has zero, one, or two elements among $A_i$, $B_j$, and $\Gamma_k$. For every triplet $t=(a_i,b_j,c_k) \in T_1$, agent $J_1(t)$ gets item $\Delta_t$, agent $J_2(t)$ gets the items in $F(t)$, if any, and item $\Theta_t$, and agent $J_3(t)$ gets item $Z_t$.
	\item 
	For every triplet $t=(a_i,b_j,c_k) \in T_2$, agent $J_1(t)$ gets item $\Delta_t$, agent $J_2(t)$ gets the items in $F(t)$, if any, and agent $J_3(t)$ gets item $Z_t$.
	\item Agent 3 gets item $\Theta_t$ for every triplet $t\in \mathcal{M}\cup T_2$.
	\item Agent 2 gets item $\Lambda$.
	\item Agent 1 gets no items.
\end{itemize}
We claim that the allocation above is envy-freeable by assigning a subsidy of $\chi$ to agent 1 and a subsidy of $\chi$ to agent $J_2(t)$ for every triplet $t\in T_2$ (if any). 

Indeed, agent 1 has positive value only for item $\Lambda$, which is given to agent 2, who gets no subsidy. Also, no other agent gets a subsidy more than the subsidy $\chi$ that is given to agent 1. Hence, agent 1 is not envious. Agent 2 gets item $\Lambda$, which is the only item she values positively and much higher than the subsidy given to any other agent. Hence, agent 2 is not envious either. Agent 3 gets exactly $\max\{K,L\}$ items of total value $\chi\max\{K,L\}$. She does not envy agent 2 who gets item $\Lambda$ (which agent 3 values for $\chi K$) since no subsidy is given to agent 2. Clearly, the value of agent 3 is much higher than the subsidy given to any other agent.

Consider a triplet $t=(a_i,b_j,c_k) \in \mathcal{M}$. Agent $J_1(t)$ has a value of $3\chi$ for the items $A_i$, $B_j$, and $\Gamma_k$ she gets. The remaining items for which she has positive valuation of $3\chi$ have been given to agents $J_2(t)$ and $J_3(t)$, respectively. Since these agents do not get subsidies, agent $J_1(t)$ is not envious of them. Clearly, agent $J_1(t)$ is not envious of any other agent since she has zero value for all other items and no agent gets a subsidy more than $\chi$. Agent $J_3(t)$ gets item $Z_t$, the only item for which she has positive value and does not envy any other agent since no one gets a subsidy higher than $\chi$. Agent $J_2(t)$ gets a value of $\chi$ from item $\Delta_t$ and does not envy agent $J_3(t)$, who gets item $Z_t$, or agent 3, who gets item $\Theta_t$, as these agents receive no subsidy. Clearly, agent $J_2(t)$ envies no other agent.

Now consider a triplet $t=(a_i,b_j,c_k) \not \in \mathcal{M}$. Agent $J_1(t)$ has a value of $3\chi$ for the item $\Delta_t$ she gets. The remaining items for which she has positive valuation have been allocated as follows. Item $Z_t$ has been given to agent $J_3(t)$; clearly, agent $J_1(t)$ is not envious of $J_3(t)$ since the latter gets no subsidies. The items in $F(t)$ have been given to agent $J_2(t)$. Again, agent $J_1(t)$ is not envious of $J_2(t)$ since $F(t)$ contains at most two items (which agent $J_1(t)$ values for $\chi$ each) and agent $J_2(t)$ gets a subsidy of zero (if $t\in T_1$) or $\chi$ (if $i\in T_2$). Clearly, $J_1(t)$ does not envy any other agent. Agent $J_3(t)$ gets item $Z_t$, the only item for which she has positive value and does not envy any other agent since no one gets a subsidy higher than $\chi$. Agent $J_2(t)$ gets a value of $\chi$, either from item $\Theta_t$ (if $t\in T_1$) or as subsidy (if $t\in T_2$), and does not envy agent $J_1(t)$ who gets item $\Delta_t$ or agent 3 who gets item $\Theta_t$ only when $t\in T_2$; recall that these two agents never get subsidies. Again, agent $J_2(t)$ envies no other agent.

\subsection{Adapting the proof for the case $\chi=0$}
The modification required in our reduction so that it covers the case $\chi=0$ as well is to remove agent 1 and replace $\chi$ with 1 in the definition of valuations. In particular, the agents $J_1(t)$, $J_2(t)$, and $J_3(t)$ that correspond to the triplet $t=(a_i,b_j,c_j)$ have valuations $0$ for all items besides the items $A_i$, $B_j$, $\Gamma_k$, $\Delta_t$, $Z_t$, and $\Theta_t$. Agent 2 has valuation $0$ for all items besides item $\Lambda$ and agent 3 has valuation zero for all items besides item $\Lambda$ and items $\Theta_t$ for $t\in T$. The remaining valuations are now as follows:

\begin{center}
	\begin{tabular}{r|ccccccc}
		& $A_i$ & $B_j$ & $\Gamma_k$ & $\Delta_t$ & $Z_t$ & $\Theta_t$ & $\Lambda$ \\\hline
		$2$      &  $0$ & $0$ & $0$ & $0$ & $0$ & $0$ & $K$ \\
		$3$      &  $0$ & $0$ & $0$ & $0$ & $0$ & $1$ & $ K$ \\
		$J_1(t)$ & $1$ & $1$ & $1$ & $3$ & $3$ & $0$ & $0$\\
		$J_2(t)$ & $0$ & $0$ & $0$ & $1$ & $1$ & $1$ & $0$\\
		$J_3(t)$ & $0$ & $0$ & $0$ & $0$ & $1$ & $0$ & $0$
	\end{tabular}
\end{center}
The same reasoning as in our proof for the case $\chi\not=0$ gives a minimum amount of subsidies for the \problem\ instance of exactly $\max\{K-L,0\}$, where $L$ is the maximum 3D matching size in the MAX-3DM-2 instance. In this way, we get that \problem\ is NP-hard to approximate within $0.01n$ (i.e., it is NP-hard to distinguish between envy-free instances and instances that need subsidies of $0.01n$) and the construction satisfies $\sumv{v}<30n$. This yields the desired inapproximability result in the statement of Theorem~\ref{thm:hardness} for the case $\chi=0$ as well.

\section{Concluding remarks}\label{sec:open}
We have initiated the study of the optimization problem \problem. The challenging open problem that deserves investigation is to close the gap between the trivial approximation guarantee of $n-1$ in Section~\ref{sec:prelim} and our negative result for super-constant numbers of agents in Section~\ref{sec:hardness}. Unfortunately, more sophisticated existing algorithms, such as the recent one by \cite{BDN+19}, do not lead to better approximations. 

We remark that $\max{v}$ could be used alternatively to $\sumv$ in the definition of the approximation guarantees of \problem. Actually, the guarantee for our dynamic programming algorithm is stated in terms of $\max{v}$. We can express the rest of our results using $\max{v}$ as well. First, the trivial algorithm presented at the end of Section~\ref{sec:prelim} uses an amount of $\chi + (n-1)\max{v}$ as subsidies. Second, an adaptation of the current proof of the inapproximability result can easily give that approximating \problem\ within an additive term of $c \times \max{v}$ for a constant $c$ is NP-hard. The important observation is that $\max{v}<n \chi$ (or $\max{v}<n$ when $\chi=0$) in our construction. Then, distinguishing between \problem\ instances in which the minimum amount is at most $\chi$ and at least $\chi (1+0.01n)$ (or at least $0.01n$ when $\chi=0$) requires to distinguish between \problem\ instances in which the minimum amount is at most $\chi$ and at least $\chi + 0.01\max{v}$. So, the inapproximability constant is a bit higher in this case. The main advantage of adopting $\sumv$ is that it makes the problem of computing the tight approximation factor more challenging.

Interestingly, an advantage of the trivial algorithm is that the particular payments incentivize the agents to report their valuations truthfully. What is the best possible approximation guarantee that can be obtained for \problem\ by truthful algorithms? Unfortunately, a simple application of Myerson's characterization in single-item settings~\citep{M81} indicates that no approximation guarantee better than $n-1$ is possible. Indeed, consider instances with a single item. By the characterization of envy-freeable allocations by~\cite{HS19} (i.e., the second statement in Theorem~\ref{thm:hs19}), we know that the agent with the highest valuation should get the item. Then, Myerson's characterization for truthful mechanisms in single parameters environments and our requirement for non-negative payments give us the specific form payments should have so that truthful reporting is a dominant strategy for all agents when this algorithm is used: if the agent $i$ who gets the item receives payment of $p\geq 0$, agent $t$ should get a payment of exactly $p+v_i-v_t$, where $v_i$ and $v_t$ are the payments of agents $i$ and $t$. Now, consider specifically the instance in which one agent has value $1$ for the item, and all other agents have value $0$. Truthfulness requires (at least) a unit of subsidy to each agent that does not get the item (i.e., total subsidies of $n-1$ while $\sumv=1$), even though there is clearly an allocation that is envy-free without any payments. This yields the claimed lower bound of $n-1$ in the approximation guarantee.

\paragraph{Acknowledgements.} Part of this work was done while the authors were at the Department of Computer Engineering and Informatics, University of Patras, Patras, Greece.

\bibliography{neg}

\end{document}